\documentclass[journal]{IEEEtran}
\usepackage{amsmath}
\usepackage{amssymb}
\usepackage{amsthm}
\usepackage{bm}
\usepackage{graphicx}

\usepackage{xcolor}
\usepackage[switch]{lineno}   

\newcommand{\jj}{\mathrm{j}}
\newcommand{\eu}{\mathrm{e}}
\newcommand{\CN}{\mathcal{CN}}
\newcommand{\Grm}{\mathcal{G}}
\newcommand{\Grmb}{\mathcal{G}_{b}}
\newcommand{\inner}[2]{\langle #1,\, #2\rangle}
\newcommand{\Pperp}[1]{\bm{P}^{\perp}_{#1}}
\newcommand{\Tpri}{T_{\mathrm{PRI}}}
\newcommand{\Reff}{R_0}
\newcommand{\crb}{\mathrm{CRB}}

\newcommand{\vg}{\bm{g}}        
\newcommand{\vl}{\bm{\ell}}     
\newcommand{\vt}{\bm{t}}        
\newcommand{\vh}{\bm{h}}        
\newcommand{\vfh}{\bm{f}_{\!H}} 
\newcommand{\vone}{\bm{1}}

\newtheorem{corollary}{Corollary}
\newtheorem{proposition}{Proposition}

\begin{document}
%\linenumbers

\title{Velocity Information Geometry\\ of Coherent Intra-CPI Waveform Agility}

\author{Charles E. Thornton \\ Virginia Tech NSI, Blacksburg, VA
\thanks{Correspondence: cthorn14@vt.edu}}

\maketitle

\begin{abstract}
Spectrum sharing forces radars to vary carrier frequency and bandwidth on a pulse-to-pulse basis within a coherent processing interval (CPI). While the resulting range-Doppler distortion is well-studied, the corresponding velocity-estimation limit is not. We show that in the resolved-bin slow-time model of coherent agile-CPI processing, the effective Fisher information for radial velocity is the SNR-weighted energy of the carrier--time lever arm that survives projection out of the range and phase nuisance subspace. The carrier sequence thus sets the projection geometry, while the bandwidth sequence enters only through SNR weighting. Two consequences follow. First, the carrier sequence inflates the bound by a closed-form factor governed by the correlation between carrier offset and slow time: randomized or orthogonalized hops are nearly harmless, while ramp-correlated hops can severely degrade velocity information. Second, under matched filtering at equal pulse energy, the velocity Cram\'er--Rao bound (CRB) is invariant to the bandwidth sequence; a corollary recasts the output-SNR loss of agile-CPI mismatched filtering as a processing cost entering only through a per-pulse mismatch loss. The bound is verified against a brute-force Fisher matrix and Monte-Carlo maximum-likelihood estimation. The result yields a design principle: carrier hopping should be chosen not only for spectral coexistence but also to preserve the velocity-information residual.
\end{abstract}

\begin{IEEEkeywords}
Cram\'er--Rao bound, cognitive radar, waveform agility, spectrum sharing,
velocity estimation, mismatched filtering.
\end{IEEEkeywords}

%%%%%%%%%%%%%%%%%%%%%%%%
\section{Introduction}
%%%%%%%%%%%%%%%%%%%%%%%%
Spectrum congestion has made dynamic spectrum sharing a practical requirement for radar, and a common response is to hop carrier frequency and bandwidth on a pulse-to-pulse basis within a single
coherent processing interval (CPI) \cite{martone,kirk,mattingly}. Such intra-CPI waveform agility breaks the assumption that every pulse in the CPI is identical, and a substantial body of work has developed processing techniques to repair the resulting distortion, including
Richardson--Lucy deconvolution \cite{kirk}, phase correction for frequency-agile processing \cite{mattingly}, and mismatched filtering and clutter-compensation approaches for waveform-agile CPIs \cite{mattingly,owen}. Related work has imposed constraints on waveform adaptation to mitigate coherent-processing degradation in agile CPIs \cite{thornton}.

A complementary question has received less attention: what does the waveform agility pattern cost in \emph{velocity information}? Cram\'er--Rao bounds (CRBs) exist for related problems such as
frequency estimation for coherent pulse trains~\cite{pourhomayoun} and velocity / range-shift bounds for stepped-frequency profiling~\cite{liutsp}. However, neither yields a closed-form
velocity bound as a function of the intra-CPI agility pattern under
coherent slow-time processing, and the cognitive random-stepped-frequency literature~\cite{huang} optimizes carrier selection against a sparse-recovery bound, a different objective.

This correspondence shows that velocity information in the resolved-bin coherent agile-CPI model is geometrically determined by a projection residual: the effective Fisher information is the SNR-weighted energy of a carrier--time lever arm after projection out of the range and phase nuisance subspace, so carrier selection sets the information geometry while bandwidth enters only through SNR weighting. Three consequences follow. First, a carrier-hop inflation law (Proposition~\ref{prop:carrier}) shows that velocity-information loss is governed by hop shape rather than hop magnitude: randomized or orthogonalized hop sequences are nearly harmless, whereas ramp-correlated sequences can severely degrade velocity information. Second, bandwidth invariance holds under matched filtering (Proposition~\ref{prop:invariance}). Third, the agile-CPI mismatched-filter loss is identified as a processing rather than fundamental cost (Corollary~\ref{cor:cost}).

%%%%%%%%%%%%%%%%%%%%%%%%%%%%%%
\section{Signal model and velocity information geometry}
%%%%%%%%%%%%%%%%%%%%%%%%%%%%%%
A CPI of $M$ linear frequency-modulated (LFM) pulses, indexed $m=0,\dots,M-1$, is transmitted. Pulse $m$ has total carrier $G(m)=F_0+F_H(m)$ and instantaneous bandwidth $\beta(m)$. We consider a single point target in white noise, with no clutter or interference (the extension to correlated disturbances is discussed in Sec.~\ref{sec:open}). Target motion is purely radial and far-field over the CPI, with tangential motion and wavefront curvature neglected. The baseband, pulse-compressed slow-time sample from the selected target range bin is
\begin{align}
  s(m) &= a\,q_m\,\eu^{\jj\Phi(m)}, \quad
  \Phi(m) \triangleq \phi_0+\phi(m), \\
  \phi(m) &= -\tfrac{4\pi}{c}G(m)(\Reff + v\,t_m),
  \label{eq:slowtime}
\end{align}
where $t_m=m\Tpri$, $v$ is the radial velocity, $A=a\,\eu^{\jj\phi_0}$ is the complex reflectivity, and $q_m>0$ is the known per-pulse peak filter gain. The parameter $\Reff$ is the residual range offset relative to the selected range-bin reference; the coarse range is assumed to have been fixed by pulse compression and range-bin selection. The resulting slow-time sequence contains one complex sample per pulse from the selected bin.

The velocity-dependent term in \eqref{eq:slowtime} is the
slow-time accumulation of the standard Doppler shift. Specifically,
with $f_D(m)=2vG(m)/c$, the phase accrued by pulse $m$ is
$-2\pi f_D(m)t_m=-\tfrac{4\pi}{c}G(m)vt_m$. Intra-pulse Doppler
distortion of the pulse envelope is neglected under the stop-and-hop
approximation below.

Noisy observations are given by $y(m)=s(m)+w(m)$ with $w(m)\sim\CN(0,\sigma^2(m))$, and the per-pulse
SNR is
\begin{equation}
  \gamma(m) \triangleq \frac{a^2\,q_m^2}{\sigma^2(m)}.
  \label{eq:snr}
\end{equation}
We adopt the stop-and-hop approximation, valid when $\max_m f_D(m)/\beta(m)\ll 1$ and $T_p/(M\Tpri)\ll 1$, where $f_D(m) = 2vG(m)/c$ is the per-pulse Doppler shift and $T_p$ the pulse duration. In Section~\ref{sec:open}, we discuss the consequences of relaxing this.

The estimand is $v$, and the nuisance parameters are $(\Reff,\phi_0)$. The amplitude $a$ decouples because its sensitivity $\partial_a s(m)=q_m\eu^{\jj\Phi(m)}$ is in phase quadrature with the phase sensitivities $\partial_\theta s(m)=\jj s(m)\partial_\theta\Phi(m)$, with $\theta\in\{v,\Reff,\phi_0\}$. Hence every
amplitude--phase Fisher-information entry $\operatorname{Re}\{(\partial_a s)^*\partial_\theta s\}$ vanishes. The Fisher information matrix is therefore block-diagonal between $a$ and $(v,\Reff,\phi_0)$, so $a$ may be omitted from the calculation \cite[Ch. 15]{kay}.

Writing $\eta\triangleq 4\pi/c$, the phase sensitivities
\begin{align}
  \partial_v\Phi(m)        &= -\eta\,L(m), \quad L(m)\triangleq G(m)\,t_m,
    \label{eq:dv}\\
  \partial_{\Reff}\Phi(m)  &= -\eta\,G(m), \label{eq:dR}\\
  \partial_{\phi_0}\Phi(m) &= 1 \label{eq:dphi}
\end{align}
define three sensitivity directions: the carrier vector $\vg=[G(0),\dots,G(M-1)]^{\mathsf T}$ (range direction),
the all-ones vector $\vone$ (phase direction), and the ``lever arm"
$\vl=\vg \odot \vt$ (velocity direction), where $\vt\triangleq[t_0,\dots,t_{M-1}]^{\mathsf T}$.
Each entry $L(m)=G(m)t_m$ is the coefficient multiplying $v$ in the phase. Thus, carrier frequency and elapsed slow time jointly determine the velocity sensitivity. The estimation problem therefore reduces to the geometry of three vectors in an SNR-weighted vector space.

For $y(m) = s(m) + w(m)$ with deterministic $s(m)$ and independent
$w(m) \sim \CN(0,\sigma^2(m))$, the Fisher information matrix (FIM) for real
parameters $\theta$ is $[\bm{J}(\theta)]_{ij} = \sum_m
\frac{2}{\sigma^2(m)}\,\operatorname{Re}\{\partial_{\theta_i} s^*(m)\,
\partial_{\theta_j} s(m)\}$ \cite{kay}. For the phase parameters we have
$\partial_\theta s(m) = \jj\, s(m)\, \partial_\theta \Phi(m)$ with
$|s(m)|^2 = a^2 q_m^2$, so
\[
  [\bm{J}]_{ij} = 2 \sum_{m=0}^{M-1} \gamma(m)\,
  \partial_{\theta_i}\Phi(m)\, \partial_{\theta_j}\Phi(m).
\]
These entries are naturally expressed through the SNR-weighted inner
product and associated Gram matrix
\begin{multline}
  \inner{\bm{x}}{\bm{y}} \triangleq \sum_{m=0}^{M-1}\gamma(m)\,x(m)\,y(m),
  \\
  [\Grm(\bm{u}_1,\dots,\bm{u}_k)]_{ij} = \inner{\bm{u}_i}{\bm{u}_j},
  \label{eq:gram}
\end{multline}
so that $[\bm{J}]_{ij} = 2\inner{\partial_{\theta_i}\Phi}{\partial_{\theta_j}\Phi}$:
each FIM entry is the SNR-weighted inner product of the corresponding
sensitivity vectors. With the sensitivities \eqref{eq:dv}--\eqref{eq:dphi}, the FIM in $(v, \Reff, \phi_0)$ is
\[
  \bm{J} = 2
  \begin{bmatrix}
    \eta^2 \inner{\vl}{\vl} & \eta^2 \inner{\vl}{\vg} & -\eta \inner{\vl}{\vone} \\
    \eta^2 \inner{\vl}{\vg} & \eta^2 \inner{\vg}{\vg} & -\eta \inner{\vg}{\vone} \\
    -\eta \inner{\vl}{\vone} & -\eta \inner{\vg}{\vone} & \inner{\vone}{\vone}
  \end{bmatrix},
\]
and the effective information for $v$ is the Schur complement over the nuisance block. The sign and $\eta$ factors on the nuisance sensitivities cancel in the complement (they rescale the nuisance block and its cross terms consistently), leaving
\[
  I_v^{\mathrm{eff}}
  = 2\eta^2 \Bigl( \inner{\vl}{\vl}
  - \bm{b}^{\mathsf{T}} \Grm(\vg,\vone)^{-1} \bm{b} \Bigr),
  \qquad
  \bm{b} = \bigl[\inner{\vl}{\vg}, \; \inner{\vl}{\vone}\bigr]^{\mathsf{T}}.
\]
The bracketed quantity is the squared $\gamma$-distance from $\vl$ to $\operatorname{span}\{\vg,\vone\}$ \cite{scharf}. Eliminating the nuisance parameters $(\Reff,\phi_0)$ by Schur complement is equivalent to projecting $\vl$ onto the $\gamma$-orthogonal complement of $\mathrm{span}\{\vg,\vone\}$. Therefore the effective FIM for $v$ takes a transparent geometric form: it is the squared SNR-weighted norm of the
lever arm after projecting out the nuisance subspace
$\mathrm{span}\{\vg,\vone\}$, written as
\begin{equation}
  I_v^{\mathrm{eff}}
  = 2\eta^2\,\bigl\|\Pperp{\{\vg,\vone\}}\vl\bigr\|_\gamma^2,
  \quad \|\bm{x}\|_\gamma^2 \triangleq \inner{\bm{x}}{\bm{x}},
  \label{eq:projform}
\end{equation}
where $\Pperp{\{\vg,\vone\}}$ is the $\gamma$-weighted orthogonal
projector onto the complement of $\mathrm{span}\{\vg,\vone\}$, given explicitly by
\[
  \Pperp{\{\vg,\vone\}}
  = \bm{I}_M - \bm{U}\left(\bm{U}^{\mathsf{T}}\bm{\Gamma}\bm{U}\right)^{\dagger}\bm{U}^{\mathsf{T}}\bm{\Gamma},
  \quad
  \bm{U} = [\vg \;\, \vone],
\]
with $\bm{\Gamma} = \operatorname{diag}\{\gamma(0),\dots,\gamma(M-1)\}$ and $(\cdot)^{\dagger}$ the Moore--Penrose pseudoinverse, which reduces to the ordinary inverse when $\vg$ and $\vone$ are linearly independent.\footnote{Equivalently, $\|\Pperp{\{\vg,\vone\}}\vl\|_\gamma^2$ may be computed as the squared Euclidean norm of the ordinary orthogonal projection of $\bm{\Gamma}^{1/2}\vl$ onto the complement of $\operatorname{span}\{\bm{\Gamma}^{1/2}\vg,\bm{\Gamma}^{1/2}\vone\}$. In the whitened coordinates $\tilde{\bm{x}}=\bm{\Gamma}^{1/2}\bm{x}$, Euclidean orthogonality is $\gamma$-orthogonality among the originals.} Since $\vone$ spans part of the nuisance subspace, translating the slow-time origin shifts $\vl$ only by a multiple of $\vg$, which also lies in the nuisance subspace; the effective information \eqref{eq:projform} and the resulting CRB are therefore invariant to the choice of slow-time origin.

In computable form, the same Schur complement is a ratio of Gram determinants,
\begin{equation}
  I_v^{\mathrm{eff}}
  = 2\eta^2\,\frac{\det\Grm(\vl,\vg,\vone)}{\det\Grm(\vg,\vone)},
  \label{eq:Ieff}
\end{equation}
and hence
\begin{equation}
  \crb(v) \;=\; \frac{c^2}{32\pi^2}\,
  \frac{\det\Grm(\vg,\vone)}{\det\Grm(\vl,\vg,\vone)}.
  \label{eq:crb}
\end{equation}

The determinant form \eqref{eq:Ieff}--\eqref{eq:crb} assumes $\det\Grm(\vg,\vone)\neq 0$, i.e., that $\vg$ and $\vone$ are linearly independent; in the uniform-carrier case the ratio is formally $0/0$ (Sec.~\ref{sec:uniform}). The projector form \eqref{eq:projform}, computed with the pseudoinverse above, remains well-defined in the degenerate case and recovers the reduction of Sec.~\ref{sec:uniform}.

Equation \eqref{eq:projform} is the central result of this correspondence. It shows that velocity information is precisely the component of the carrier--time lever arm that remains after removing the range and phase nuisance directions. The carriers $\{G(m)\}$ determine the geometry through the vectors $\vg$ and $\vl$, whereas the bandwidths ${\beta(m)}$ influence the bound only through the weights $\gamma(m)$.

%%%%%%%%%%%%%%%%%%%%%%%%%%%%%%%%%%%%%%%%%%%%%%
\section{Bandwidth invariance and carrier-hop dependence}
\label{sec:invariance}
%%%%%%%%%%%%%%%%%%%%%%%%%%%%%%%%%%%%%%%%%%%%%%

\begin{proposition}[Bandwidth invariance in the resolved-bin matched-filter model]
\label{prop:invariance}
For a single point target in white noise, within the resolved-bin slow-time model \eqref{eq:slowtime}, under per-pulse matched filtering at equal pulse energy, and within the stop-and-hop approximation, the velocity CRB \eqref{eq:crb} is independent of the intra-CPI bandwidth
sequence $\{\beta(m)\}$.
\end{proposition}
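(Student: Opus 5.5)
The plan is to show that $\beta(m)$ can enter \eqref{eq:crb} only through the weights $\gamma(m)$, and that under matched filtering at equal pulse energy these weights do not depend on $\beta(m)$.

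\textbf{Step 1: locate the bandwidth dependence.} By \eqref{eq:dv}--\eqref{eq:dphi}, the vectors $\vl$, $\vg$, $\vone$ depend only on the carriers $G(m)$ and the slow times $t_m$. Within the stop-and-hop approximation, no intra-pulse Doppler term couples $\beta(m)$ into the phase \eqref{eq:slowtime}. So every entry of the Gram matrices in \eqref{eq:Ieff}--\eqref{eq:crb} is an inner product $\inner{\cdot}{\cdot}$ with weights $\gamma(m)=a^2q_m^2/\sigma^2(m)$. The bandwidth sequence can therefore affect the CRB only through $\gamma(m)$, that is, only through the ratio $q_m^2/\sigma^2(m)$.

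\textbf{Step 2: show $\gamma(m)$ is bandwidth-free.} Let $x_m(t)$ be pulse $m$, with energy $E_m=\int|x_m|^2\,dt$, and let the input noise be white with spectral density $N_0$.

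\emph{Peak gain.} The matched filter is $h_m(t)=x_m^*(-t)$. Sampled at the target delay, which is the resolved-bin peak, its output amplitude is proportional to $\int|x_m|^2\,dt=E_m$, so $q_m\propto E_m$.

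\emph{Output noise.} The output noise variance is $\sigma^2(m)=N_0\int|h_m|^2\,dt=N_0E_m$.

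\emph{Ratio.} Hence $\gamma(m)=a^2E_m^2/(N_0E_m)=a^2E_m/N_0$. This is the classical fact that matched-filter output SNR depends only on pulse energy and not on waveform shape, including LFM bandwidth or time--bandwidth product. With equal pulse energy $E_m\equiv E$, we get $\gamma(m)\equiv a^2E/N_0$, independent of $\{\beta(m)\}$.

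\textbf{Step 3: conclude.} Substitute the bandwidth-free $\gamma(m)$ into \eqref{eq:crb}. The Gram determinants are functions of $\{G(m)\},\{t_m\}$ and the constant $\gamma$ only, so the CRB is independent of $\{\beta(m)\}$.

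As a by-product, the constant weight factors out of both Gram determinants. This leaves $\crb(v)=\frac{c^2}{32\pi^2\gamma}\det\Grm_0(\vg,\vone)/\det\Grm_0(\vl,\vg,\vone)$, where $\Grm_0$ is the unweighted Gram matrix. The uniform-carrier degenerate case carries over through the projector form \eqref{eq:projform}, since that form depends on $\beta$ only through $\Gamma$.

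\textbf{Main obstacle.} The delicate step is Step 2: making precise that the sample taken in the "selected range bin" is the matched-filter peak for every pulse.

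\emph{Residual offset.} A small $\Reff$ could sample the compressed pulse off-peak, by an amount whose loss depends on the mainlobe width $\sim1/\beta(m)$, and that would reintroduce $\beta$ into the amplitude.

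\emph{Intended resolution.} I would argue that the resolved-bin model treats $\Reff$ only as a phase parameter, with the envelope sampled at its peak to within the stop-and-hop and resolved-bin assumptions. On that reading, $q_m$ is by definition the peak gain, and any envelope straddle loss is a higher-order effect excluded by the model.

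I would also note that equal pulse energy, not equal peak power, is what is needed. With unequal energies the weights become $\gamma(m)\propto E_m$, which again does not involve $\beta(m)$ directly.
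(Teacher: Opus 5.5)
Your proposal is correct and follows essentially the same route as the paper. The paper's sketch observes that $\beta(m)$ is absent from $\vg$, $\vl$, $\vone$, and that matched filtering at equal pulse energy makes the compressed peak SNR $\gamma(m)$ constant (Sec.~V gives $\gamma_0=2E_p/N_0$), so $\beta(m)$ enters no Gram entry of \eqref{eq:crb}; your Step~2 makes explicit the matched-filter SNR computation the paper only asserts (the factor-of-two difference is a noise convention), and your handling of peak sampling and of the uniform-carrier case via \eqref{eq:projform} is consistent with the paper's model assumptions.
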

\begin{proof}[Sketch]
$\beta(m)$ appears nowhere in $\vg$, $\vl$, or $\vone$. Under matched
filtering at equal pulse energy the compressed peak SNR is independent
of bandwidth, so $\gamma(m)$ is constant in $m$; $\beta(m)$ therefore
enters no Gram entry in \eqref{eq:crb}. The invariance is exact
(verified to machine precision, Sec.~\ref{sec:numerics}).
\end{proof}

Within this model, $\beta(m)$ enters only through the compressed pulse shape and the per-pulse gain $q_m$ (Sec.~V), never through the phase history; the content of Proposition~\ref{prop:invariance} is thus the precise delineation of the assumptions under which bandwidth agility carries no fundamental velocity information, in contrast to the processing-induced losses of Corollary~\ref{cor:cost}. Section~\ref{sec:open} identifies relaxation of the stop-and-hop approximation as the mechanism by which bandwidth re-enters the velocity sensitivity.

For the carrier channel, write $\vg=F_0\vone+\vfh$, so that
$\mathrm{span}\{\vg,\vone\}=\mathrm{span}\{\vone,\vfh\}$ for any
non-constant $\vfh$. Let $F_{H,\mathrm{rms}}=\|\vfh\|_2/\sqrt{M}$ denote the
root-mean-square hop magnitude, $\vh\triangleq \vfh/F_{H,\mathrm{rms}}$
the unit-RMS hop shape, and
$\epsilon\triangleq F_{H,\mathrm{rms}}/F_0$ the (dimensionless, scalar) hop amplitude, $F_0$ being the scalar center carrier.
Let
\begin{equation}
  r_\gamma \;=\;
  \frac{\inner{\Pperp{\vone} \vt}{\Pperp{\vone} \vfh}_\gamma}
       {\|\Pperp{\vone} \vt\|_\gamma\,\|\Pperp{\vone} \vfh\|_\gamma}
  \label{eq:rgamma}
\end{equation}
be the SNR-weighted correlation between centered slow time and
centered carrier offset.

\begin{proposition}[Carrier-hop dependence]
\label{prop:carrier}
Under the conditions of Proposition~\ref{prop:invariance} with a
non-constant carrier sequence,
\begin{equation}
  \frac{\crb_{\mathrm{agile}}(v)}{\crb_{\mathrm{unif}}(v)}
  = [(1-r_\gamma^2) + O(\epsilon)]^{-1},
  \label{eq:carrier_ratio}
\end{equation}
where $\crb_{\mathrm{unif}}$ is the uniform-CPI bound, derived in Sec.~\ref{sec:uniform}. When $1-r_\gamma^2$ is bounded away from zero this reduces to $(1-r_\gamma^2)^{-1}[1+O(\epsilon)]$, controlled by the hop shape $r_\gamma$ and not the amplitude $\epsilon$. In the
pure-ramp limit $F_H(m)\propto t_m$ the leading term vanishes, the surviving information is second order in $\epsilon$, and the inflation scales as $O(\epsilon^{-2})$. 
\end{proposition}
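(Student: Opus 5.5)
The plan is to work directly from the projector form \eqref{eq:projform} rather than the determinant ratio \eqref{eq:crb}. The determinant ratio is singular in the uniform case, whereas the projector form is valid for both the agile and the uniform CPI. Under the conditions of Proposition~\ref{prop:invariance}, $\gamma(m)\equiv\gamma$ is constant. I will nevertheless keep general $\gamma$-weighted inner products, since the argument never uses constancy. For the uniform CPI, $\vg=F_0\vone$, so the nuisance span collapses to $\mathrm{span}\{\vone\}$ and $\vl=F_0\vt$. This gives
\[
I_{v,\mathrm{unif}}^{\mathrm{eff}}=2\eta^2F_0^2\|\Pperp{\vone}\vt\|_\gamma^2 ,
\]
which I take as the uniform reference of Sec.~\ref{sec:uniform}; I assume that section uses the same center carrier $F_0$.

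For the agile CPI, write $\vfh=\epsilon F_0\vh$. Then $\vl=F_0(\vt+\epsilon\,\vh\odot\vt)$, and the nuisance span is $\mathrm{span}\{\vone,\vh\}$, which is non-degenerate because $\vfh$ is non-constant. By linearity of the projector,
\[
\Pperp{\{\vg,\vone\}}\vl=F_0\bigl(\bm{a}+\epsilon\bm{b}\bigr),\qquad \bm{a}=\Pperp{\{\vone,\vh\}}\vt,\ \ \bm{b}=\Pperp{\{\vone,\vh\}}(\vh\odot\vt).
\]
The key step is to evaluate $\|\bm{a}\|_\gamma^2$ by Gram--Schmidt. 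First project out $\vone$, then project out the residual direction $\Pperp{\vone}\vh$. Because $\Pperp{\vone}\vt\perp_\gamma\vone$, this gives
\[
\|\bm{a}\|_\gamma^2=\|\Pperp{\vone}\vt\|_\gamma^2\,(1-r_\gamma^2).
\]
Here $r_\gamma$ is exactly \eqref{eq:rgamma}, and it is invariant to rescaling $\vfh\mapsto\vh$. Expanding the agile information then gives
\[
\frac{I^{\mathrm{eff}}_{v,\mathrm{agile}}}{I^{\mathrm{eff}}_{v,\mathrm{unif}}}
=(1-r_\gamma^2)+\frac{2\epsilon\inner{\bm{a}}{\bm{b}}+\epsilon^2\|\bm{b}\|_\gamma^2}{\|\Pperp{\vone}\vt\|_\gamma^2}.
\]
Inverting this ratio yields \eqref{eq:carrier_ratio}.

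The main obstacle is making the $O(\epsilon)$ claim honest. The correction terms must be bounded uniformly in the shape, for fixed $M$, $\Tpri$, and $\gamma$. This holds because projectors are contractions, so $\|\bm{b}\|_\gamma\le\|\vh\odot\vt\|_\gamma\le \max_m|h(m)|\,\|\vt\|_\gamma\le\sqrt{M}\,\|\vt\|_\gamma$ by the unit-RMS normalization. Combining this with Cauchy--Schwarz on $\inner{\bm{a}}{\bm{b}}$ bounds the remainder by $C\epsilon$ with $C$ depending only on $M$. When $1-r_\gamma^2\ge\delta>0$, factoring out $(1-r_\gamma^2)$ and expanding $(1+x)^{-1}$ gives $(1-r_\gamma^2)^{-1}[1+O(\epsilon)]$.

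For the pure-ramp limit, $\vh$ is affine in $\vt$, so $\vt\in\mathrm{span}\{\vone,\vh\}$. Hence $\bm{a}=0$ and $r_\gamma^2=1$, and the cross term vanishes, leaving $\epsilon^2\|\bm{b}\|_\gamma^2$. It remains to check that $\bm{b}\neq0$. Since $\vh\odot\vt$ is quadratic in $m$, it is not in the affine span of $\vone$ and $\vt$ when $M\ge3$ and the ramp slope is nonzero. This gives $\|\bm{b}\|_\gamma^2>0$ independent of $\epsilon$, so the CRB ratio is $\|\Pperp{\vone}\vt\|_\gamma^2/(\epsilon^2\|\bm{b}\|_\gamma^2)=O(\epsilon^{-2})$.
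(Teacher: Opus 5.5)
Your proof is correct and follows essentially the same route as the paper. It uses the projector form \eqref{eq:projform}, reduces $\|\Pperp{\{\vone,\vh\}}\vt\|_\gamma^2$ to $\|\Pperp{\vone}\vt\|_\gamma^2(1-r_\gamma^2)$ by Gram--Schmidt, divides by $I_v^{\mathrm{unif}}=2\eta^2F_0^2\|\Pperp{\vone}\vt\|_\gamma^2$, and notes that $\vt\in\mathrm{span}\{\vone,\vh\}$ in the ramp case, so only $\epsilon^2\|\Pperp{\{\vone,\vh\}}(\vh\odot\vt)\|_\gamma^2$ survives; your exact remainder and its bound, plus the $M\ge 3$ nonvanishing check (only a proviso in the paper), make the $O(\epsilon)$ and $O(\epsilon^{-2})$ claims more explicit without changing the argument.
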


\begin{proof}
By \eqref{eq:projform}, $I_v^{\mathrm{agile}}
=2\eta^2\|\Pperp{\{\vone,\vfh\}}\vl\|_\gamma^2$ with
$\vl=F_0(\vt+\epsilon\,\vh\odot \vt)$. Hence
\begin{equation*}
  \tfrac{1}{2\eta^2 F_0^2}\,I_v^{\mathrm{agile}}
  = \|\Pperp{\{\vone,\vh\}}\vt\|_\gamma^2 + O(\epsilon),
\end{equation*}
the $O(\epsilon)$ collecting cross terms in $\vh\odot \vt$. Writing
$\vt_c=\Pperp{\vone}\vt$, $\vh_c=\Pperp{\vone}\vh$,
\begin{equation*}
  \|\Pperp{\{\vone,\vh\}}\vt\|_\gamma^2
  = \|\vt_c\|_\gamma^2 - \frac{\inner{\vt_c}{\vh_c}_\gamma^2}{\|\vh_c\|_\gamma^2}
  = \|\vt_c\|_\gamma^2(1-r_\gamma^2).
\end{equation*}
Since $I_v^{\mathrm{unif}}=2\eta^2 F_0^2\|\vt_c\|_\gamma^2$
(from the CRB in Sec.~\ref{sec:uniform}), dividing gives \eqref{eq:carrier_ratio}.
\end{proof}

The pure-ramp scaling follows directly from \eqref{eq:projform}. When $\vh$ is affine in $\vt$, the nuisance subspace $\operatorname{span}\{\vone,\vh\}$ contains $\vt$, so $\Pperp{\{\vone,\vh\}}\vt=\bm{0}$ and only the cross term of $\vl=F_0(\vt+\epsilon\,\vh\odot\vt)$ survives the projection:
\[
  I_{v}^{\mathrm{agile}}
  = 2\eta^2F_0^2\epsilon^2\,
  \bigl\|\Pperp{\{\vone,\vh\}}(\vh\odot\vt)\bigr\|_\gamma^2
  = O(\epsilon^2),
\]
provided the projected quadratic component is nonzero, and hence $\crb(v)=O(\epsilon^{-2})$. The $O(\epsilon)$ residual in \eqref{eq:carrier_ratio} is sign-indefinite. The terms ``inflation'' and ``cost'' attach to the nonnegative leading factor $1/(1-r_\gamma^2)\ge 1$ (equality iff $r_\gamma=0$); the cross terms at $O(\epsilon)$ may take either sign, and the orthogonalized sequence of Table~\ref{tab:carrier} ($r_\gamma=0$, ratio $0.9998$) realizes a marginally favorable instance. A well-chosen hop sequence thus improves on the uniform CPI by at most an $O(\epsilon)$ margin, never at leading order.

Two points deserve examination. First, away from the near-ramp regime the inflation is controlled by the shape $r_\gamma$, not the amplitude $\epsilon$. Shrinking the hops in amplitude does not
reduce $r_\gamma$, so the multiplicative correction is $O(\epsilon)$ relative to a finite $1/(1-r_\gamma^2)$. This corrects a natural but mistaken intuition that front-end-limited hops ($\epsilon\!\sim\!1\%$) must be harmless. As $r_\gamma\!\to\!1$ the expansion is no longer uniform: the leading $1-r_\gamma^2$ and the $O(\epsilon)$ term become comparable, and in the pure-ramp limit the amplitude controls the bound.

Second, the cost has a clean geometric interpretation: a non-constant carrier renders fine range observable, and \eqref{eq:carrier_ratio} is the velocity-information price of the nuisance dimension spent on that observability, a price that vanishes only when the hop pattern is uncorrelated with slow time. Proposition~\ref{prop:carrier} thus doubles as a design rule: randomized or explicitly orthogonalized hop sequences ($r_\gamma\!\approx\!0$) preserve velocity accuracy while ramp-correlated sequences should be avoided.

\begin{corollary}[Randomized hopping]
\label{cor:random}
For hop offsets drawn i.i.d.\ and independent of slow time,
$r_\gamma=O_p(M^{-1/2})$, so the CRB inflation is $1+O_p(M^{-1})$, which is
near-optimal in the typical sense, incurring a penalty that vanishes as the
CPI lengthens.
\end{corollary}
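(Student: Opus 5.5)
We would prove Corollary~\ref{cor:random} by combining Proposition~\ref{prop:carrier} with a variance computation for the SNR-weighted correlation \eqref{eq:rgamma}. Under the conditions of Proposition~\ref{prop:invariance}, $\gamma(m)\equiv\gamma$ is constant. The weighted inner product is therefore a scalar multiple of the Euclidean one, and $r_\gamma$ reduces to the ordinary sample correlation between the centered slow time $\vt_c$ and the centered hop vector $\vh_c$ (the scale $F_{H,\mathrm{rms}}$ cancels).

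The first step is to write $F_H(m)=\mu+\sigma_H z_m$, with $z_m$ i.i.d., zero mean and unit variance, and to assume a finite fourth moment. The numerator of $r_\gamma$ is then $\inner{\vt_c}{\vfh}=\sigma_H\sum_m t_{c}(m)z_m$, since centering is idempotent and self-adjoint and the constant $\mu$ is annihilated by $\vt_c\perp\vone$. Because $\vt$ is deterministic, this sum has mean zero and variance $\sigma_H^2\|\vt_c\|^2$. Chebyshev's inequality then gives $\inner{\vt_c}{\vh_c}/\|\vt_c\|=O_p(1)\cdot\sigma_H$.

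The second step handles the denominator factor $\|\vh_c\|$. We have $\|\vfh-\bar F_H\vone\|^2=\sigma_H^2\bigl(\sum_m z_m^2-M\bar z^2\bigr)$. By the weak law of large numbers this equals $\sigma_H^2 M(1+o_p(1))$, because $\bar z=O_p(M^{-1/2})$. Combining the two steps gives $r_\gamma=O_p(1)/\sqrt{M(1+o_p(1))}=O_p(M^{-1/2})$, and hence $r_\gamma^2=O_p(M^{-1})$.

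The third step substitutes this into \eqref{eq:carrier_ratio}, which gives $[1-O_p(M^{-1})+O(\epsilon)]^{-1}=1+O_p(M^{-1})$ at leading order in the hop amplitude. This is where the main obstacle lies. The $O(\epsilon)$ remainder in Proposition~\ref{prop:carrier} is not itself shown to be small in $M$, so the claim ``$1+O_p(M^{-1})$'' has to be read as a statement about the leading factor $1/(1-r_\gamma^2)$, with the $O(\epsilon)$ correction carried separately.

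To tighten this, I would examine the cross term $\vh\odot\vt$. After projection, its contribution relative to $\|\vt_c\|^2$ involves sums like $\sum_m t_c(m)^2 h(m)$. Under i.i.d.\ hops these concentrate around their mean, which is nonzero only through the centered-hop structure, and their fluctuation is of order $M^{-1/2}$. I would therefore expect the $O(\epsilon)$ term to be $\epsilon\,O_p(M^{-1/2})$ plus a deterministic $O(\epsilon)$ piece that does not depend on the hop shape. If that deterministic piece does not vanish, the corollary holds as stated modulo $O(\epsilon)$, matching the ``near-optimal in the typical sense'' phrasing. I would note this explicitly rather than attempt a uniform bound.
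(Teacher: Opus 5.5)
Your proposal is correct and follows the reasoning the paper relies on; the paper states the corollary without a separate proof. Under constant $\gamma$, $r_\gamma$ is the sample correlation of the i.i.d.\ hops with the deterministic centered slow time, so Chebyshev on the numerator and the law of large numbers on the denominator give $O_p(M^{-1/2})$. Substituting this into the leading factor $1/(1-r_\gamma^2)$ of Proposition~\ref{prop:carrier} gives $1+O_p(M^{-1})$. You set the $O(\epsilon)$ remainder aside exactly as the paper does, since it attaches ``inflation'' only to that leading factor.

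On your hedged last point, for hops centered on $F_0$ the first-order remainder is only $\epsilon\,O_p(M^{-1/2})$. With the origin at mid-CPI it is governed by the zero-mean sum $\sum_m t_c(m)^2 F_H(m)$. The first $M$-independent correction is a favorable $+\epsilon^2$ in the information, because the mean-square carrier exceeds $F_0^2$. A deterministic $O(\epsilon)$ piece therefore appears only if the hop mean is offset from $F_0$.
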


\noindent\emph{Relation to the stepped-frequency projection bound.}
The projection-residual form \eqref{eq:projform} shares its mathematical structure with the stepped-frequency velocity CRB of Liu \emph{et al.}~\cite{liutsp}. The models, however, differ substantially. Liu \emph{et al.} retain both inter-pulse phase and envelope-delay information in a full-waveform high-range-resolution profiling framework, whereas the present model begins after pulse compression and range-bin selection and therefore retains only the resolved-bin coherent slow-time phase history. This reduction collapses the nuisance geometry to the explicit subspace $\mathrm{span}\{\vg,\vone\}$ and exposes the closed-form hop-correlation penalty \eqref{eq:carrier_ratio}, which is not explicit in the full-waveform formulation.

%%%%%%%%%%%%%%%%%%%%%%%%%%%%%%%%%%
\section{Uniform-CPI reduction}
\label{sec:uniform}
%%%%%%%%%%%%%%%%%%%%%%%%%%%%%%%%%%

Set $G(m)=F_0$, $\gamma(m)=\gamma$. Then $\vg\parallel\vone$, so
$\det\Grm(\vg,\vone)=0$, and \eqref{eq:crb} is nominally $0/0$. In other words, range
and absolute phase are confounded when the carrier is constant.
Collapsing the two confounded nuisances into a single constant phase
gives the proper $2\times2$ reduction
\begin{equation}
  \crb(v) = \frac{3c^2}{8\pi^2 F_0^2\,\gamma\,M(M^2-1)\,\Tpri^2},
  \label{eq:uniform}
\end{equation}
recovering the familiar coherent bound $\crb(v)\propto 1/[F_0^2\,
\mathrm{SNR}\,M^3]$; this is the $\crb_{\mathrm{unif}}$ appearing in \eqref{eq:carrier_ratio}. 
Defining the wavelength
\[
  \lambda=\frac{c}{F_0},
\]
and the conventional Doppler velocity resolution
\[
  \delta v
  =
  \frac{\lambda}{2(M-1)\Tpri},
\]
Eq.~\eqref{eq:uniform} may be written equivalently as
\[
  \crb(v)
  =
  \frac{3(M-1)}
       {2\pi^2\gamma M(M+1)}
  \,\delta v^2,
\]
or, for large $M$,
\[
  \crb(v)
  \approx
  \frac{3}{2\pi^2\gamma M}\,\delta v^2.
\]
Thus, the uniform-CPI velocity CRB is proportional to the squared Doppler velocity resolution, with the proportionality factor determined by the coherent integration length and per-pulse SNR.

The degeneracy is itself informative: intra-CPI frequency agility is precisely what renders range separable from absolute phase.

%%%%%%%%%%%%%%%%%%%%%%%%%%%%%%%%%%%%%%%%%%%%%%%%%
\section{Per-pulse SNR and mismatched filtering}
%%%%%%%%%%%%%%%%%%%%%%%%%%%%%%%%%%%%%%%%%%%%%%%%%

The bandwidth/filter dependence lives entirely in the per-pulse gain $q_m$ (equivalently the noise variance $\sigma^2(m)$); successive pulses occupy independent PRIs, so the slow-time noise stays diagonal. For a constant-amplitude LFM of fixed duration $T_p$ and sample rate, the pulse energy $E_p$ is independent of $\beta(m)$: a wider sweep at constant amplitude and duration carries the same energy. The matched-filter peak SNR $\gamma_0=2E_p/N_0$ is therefore constant in $m$, and $\beta(m)$ cancels from every Gram entry, which is the mechanism
behind Proposition~\ref{prop:invariance}.

Suppressing range-sidelobe modulation (RSM) across a bandwidth-agile CPI requires reshaping every
pulse's compressed response to a common template. We take the template to be the matched response of the minimum-bandwidth pulse, $Y(\omega)=|X_{\min}(\omega)|^2$, or the common-response choice, which imposes no super-resolution on any pulse and incurs the least loss.
For each frequency $\omega$, the filter is obtained from the
Tikhonov-regularized least-squares problem
\begin{equation*}
  \min_{H(\omega)}
  \left|H(\omega)X_m(\omega)-Y(\omega)\right|^2
  +\delta\left|H(\omega)\right|^2,
\end{equation*}
where $\delta>0$ is the regularization parameter. Its pointwise
minimizer is
\begin{equation}
  \hat H_m(\omega)
  =
  \frac{X_m^{*}(\omega)\,Y(\omega)}
       {|X_m(\omega)|^2+\delta},
  \label{eq:wiener}
\end{equation}
which is the standard Tikhonov solution \cite{tikhonov}.

Let $\bm{x}_m$ denote the sampled baseband waveform of pulse $m$, with
DFT $X_m(\omega)$, and $\hat{\bm{h}}_m=\mathcal{F}^{-1}\{\hat H_m^{*}(\omega)\}$
the filter-coefficient vector, applied as a correlator: the peak filter output
is $\hat{\bm{h}}_m^{\mathsf H}\bm{x}_m$, so the matched filter
$\hat H_m(\omega)=X_m^{*}(\omega)$ corresponds to $\hat{\bm{h}}_m=\bm{x}_m$,
the equality condition in \eqref{eq:mmloss}. The corresponding mismatch loss is
\begin{equation}
  L_{\mathrm{mm}}(m) = \frac{\|\hat{\bm{h}}_m\|^2\,\|\bm{x}_m\|^2}{|\hat{\bm{h}}_m^{\mathsf H}\bm{x}_m|^2} \ge 1
  \quad (\text{unity iff }\hat{\bm{h}}_m\propto \bm{x}_m).
  \label{eq:mmloss}
\end{equation}
For the minimum-bandwidth pulse, \eqref{eq:wiener} reduces to the matched filter in the zero-loading limit, so $L_{\mathrm{mm}}(\beta_{\min})=1$; for a wider pulse the filter retains only the $\beta_{\min}$-wide slice of its spectrum, losing SNR in proportion to the bandwidth excess,
$L_{\mathrm{mm}}(m)\approx \beta(m)/\beta_{\min}$
(shown numerically in Sec.~\ref{sec:numerics}). 

With $\gamma(m)=\gamma_0/L_{\mathrm{mm}}(m)
=\gamma_0\,w(m)$, the scale $\gamma_0$ factors out of every Gram entry
and the bound becomes
\begin{equation}
  \crb(v)
  =
  \frac{c^2}{32\pi^2\,\gamma_0}\,
  \frac{\det\Grmb(\vg,\vone)}
       {\det\Grmb(\vl,\vg,\vone)}.
  \label{eq:crb_weighted}
\end{equation}
Here, $\Grmb$ denotes the Gram matrix induced by the normalized
mismatch weights $w(m)$:
\begin{equation*}
  \bigl[\Grmb(\bm{u}_1,\ldots,\bm{u}_k)\bigr]_{ij}
  \triangleq
  \sum_{m=0}^{M-1}
  w(m)\,u_i(m)\,u_j(m),
\end{equation*}
where $u_i(m)$ denotes the $m$th entry of the vector $\bm{u}_i$. As with \eqref{eq:crb}, the determinant form \eqref{eq:crb_weighted} is degenerate when the carrier is uniform: $\vg=F_0\vone$, so both Gram determinants vanish irrespective of the weights $w(m)$; the degeneracy is one of nuisance geometry, not of the metric. The bandwidth-agile scenario of Sec.~\ref{sec:numerics} is therefore evaluated via the weighted two-parameter reduction, with the confounded pair $(\Reff,\phi_0)$ collapsed to a single phase nuisance and $w(m)$ carried through the reduced Gram matrices $\Grmb(\vone)$ and $\Grmb(\vl,\vone)$ (equivalently, via the pseudoinverse projector of \eqref{eq:projform}).

\begin{corollary}[Cost of RSM mitigation]
\label{cor:cost}
When RSM mitigation forecloses the matched filter and the
minimum-bandwidth-template filter \eqref{eq:wiener} is used instead,
the velocity CRB is inflated relative to Proposition~\ref{prop:invariance} only through the SNR weights $\gamma(m)=\gamma_0\,w(m)$, with $w(m)=1/L_{\mathrm{mm}}(m)$ and $L_{\mathrm{mm}}(m)\approx \beta(m)/\beta_{\min}$. The empirical output-SNR variation reported for agile-CPI filters is a property of the chosen (uniformly weighted) processor; the FIM weights by $w(m)$ automatically.
\end{corollary}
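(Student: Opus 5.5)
The plan is to treat Corollary~\ref{cor:cost} as a direct specialization of the projection form \eqref{eq:projform}, in three steps: first show that the filter choice can only touch the SNR weights, then rewrite the bound with those weights factored out, and finally read off the comparison with Proposition~\ref{prop:invariance}.

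\textbf{Step 1: the filter touches only $q_m$.} I will argue that changing from the matched filter to the template filter \eqref{eq:wiener} does not alter the phase history \eqref{eq:slowtime}. The reason is that the phase sensitivities \eqref{eq:dv}--\eqref{eq:dphi} depend only on $G(m)$ and $t_m$. The filter $\hat H_m$ is a fixed, known linear operator applied within pulse $m$. Its peak output $\hat{\bm h}_m^{\mathsf H}\bm x_m$ therefore multiplies the slow-time sample by a known complex constant. The modulus of that constant is absorbed into $q_m$. Any deterministic phase of that constant is known, so it can be compensated or absorbed into $\phi_0$ without adding a nuisance direction. Independence of the slow-time noise across PRIs is preserved, so the Fisher information matrix keeps the form derived in Section~II, with $\gamma(m)$ now equal to output peak power divided by output noise power. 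Hence $\vg$, $\vl$ and $\vone$ are unchanged, and only the $\gamma$-metric changes.

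\textbf{Step 2: compute the output SNR.} For a correlator $\hat{\bm h}_m$ in white noise of level $N_0$, I will compute
\[
  \gamma(m)=\frac{|\hat{\bm h}_m^{\mathsf H}\bm x_m|^2}{N_0\|\hat{\bm h}_m\|^2}
  =\gamma_0\,\frac{|\hat{\bm h}_m^{\mathsf H}\bm x_m|^2}{\|\hat{\bm h}_m\|^2\|\bm x_m\|^2}
  =\frac{\gamma_0}{L_{\mathrm{mm}}(m)} .
\]
This uses $\gamma_0\propto\|\bm x_m\|^2/N_0$, which is constant in $m$ at equal pulse energy. Cauchy--Schwarz then gives $L_{\mathrm{mm}}\ge1$, as in \eqref{eq:mmloss}.

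\textbf{Step 3: substitute and compare.} I will substitute these weights into \eqref{eq:projform}, or into \eqref{eq:crb} when $\vg$ is non-constant. The scalar $\gamma_0$ factors out of every Gram entry, so the Gram determinant ratio is homogeneous of degree $-1$ in $\gamma_0$, which yields \eqref{eq:crb_weighted}. In the uniform-carrier case the pseudoinverse projector handles the degeneracy instead. Setting $w\equiv1$ recovers the matched-filter bound of Proposition~\ref{prop:invariance}, so the only difference between the two bounds is the weighting $w(m)$. To support the word ``inflated,'' I will note that the weighted norm $\|\Pperp{}\vl\|_w^2$ equals $\min_{\bm u\in\mathrm{span}\{\vg,\vone\}}\sum_m w(m)(\ell(m)-u(m))^2$. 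This minimum is monotone in the weights, so $w\le1$ pointwise gives an information loss.

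\textbf{Step 4: the approximation and the interpretive clause.} The approximation $L_{\mathrm{mm}}(m)\approx\beta(m)/\beta_{\min}$ is stated as a numerical and heuristic fact; I would justify it only at the level of spectral-slice energy fractions. The clause about the ``uniformly weighted processor'' follows because the FIM sums each pulse's information with weight $\gamma(m)$, i.e.\ inverse-variance weighting. A uniformly weighted coherent sum is one particular, generally suboptimal, processor.

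\textbf{Main obstacle.} The delicate step is Step 1. I must confirm that the mismatched filter introduces no $v$- or $\Reff$-dependent phase beyond what is already in \eqref{eq:slowtime}. In particular, the template response must be evaluated at the selected peak under the stop-and-hop approximation. Otherwise a residual delay-dependent phase slope across pulses would add a new sensitivity direction. I would handle this by invoking the resolved-bin assumption directly.
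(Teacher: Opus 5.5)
Your proposal is correct and follows the paper's own route. The filter enters only through the per-pulse gain $q_m$, with the slow-time noise staying diagonal, so $\gamma(m)=\gamma_0/L_{\mathrm{mm}}(m)$. The scale $\gamma_0$ then factors out of the Gram determinants to give \eqref{eq:crb_weighted}, and the uniform-carrier degeneracy is handled by the reduced or pseudoinverse form.

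You go slightly beyond the paper in two places, both useful. First, you derive $\gamma_0/L_{\mathrm{mm}}(m)$ explicitly from the correlator output SNR, where the paper simply asserts it. Second, your weighted-least-squares monotonicity argument proves the word ``inflated'': weights $w(m)\le 1$ can only shrink $\|\Pperp{\{\vg,\vone\}}\vl\|_w^2$. The paper supports that claim only numerically.
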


%%%%%%%%%%%%%%%%%%%%%%%%%%%%%%%%%%%%%%%%%%%%%
\section{Numerical evaluation}
\label{sec:numerics}
%%%%%%%%%%%%%%%%%%%%%%%%%%%%%%%%%%%%%%%%%%%%%
We use the parameters of Mattingly \emph{et al.}~\cite{mattingly}:
$F_0=5$~GHz, $\Tpri=200~\mu$s, $T_p=30~\mu$s, $f_s=100$~MHz, $M=256$,
hops within $\pm 50$~MHz of the $100$~MHz front end, and bandwidths
over $6.54$--$53.91$~MHz. The true target parameters are $v=20$~m/s and
$\Reff=0.137$~m at per-pulse SNR $20$~dB, with i.i.d.\ uniform draws
of the hop and bandwidth sequences unless noted, using a fixed seed for
reproducibility. All matched-filter scenarios have constant per-pulse SNR, $\gamma(m)=\gamma_0$, by the mechanism of Proposition~\ref{prop:invariance}; the RSM-control mismatched-filter scenario has $\gamma(m)=\gamma_0/L_{\mathrm{mm}}(m)$ as in Sec.~V. The bound is verified against \eqref{eq:uniform} to machine precision, and independently against a brute-force complex-AWGN FIM assembled
directly from $s(m)=A\eu^{\jj\phi(m)}$, with $8$-digit agreement confirming both the Schur reduction and the constant $(c^2/32\pi^2)$. It is further validated against a Monte-Carlo maximum-likelihood estimator (MLE) which never uses the FIM. The MLE validation uses the uniform-carrier case, for which $\Reff$ and $\phi_0$ are confounded and absorb into a single complex amplitude, profiled out in closed form; the resulting profile-likelihood estimator
\[
  \widehat{v}
  = \underset{v}{\operatorname{arg\,max}}
  \Bigl|\sum_{m=0}^{M-1} y(m)\,\eu^{\jj\frac{4\pi}{c}F_0 v t_m}\Bigr|^2
\]
is maximized by a grid search over $401$ velocities spanning $\pm6\sqrt{\crb(v)}$ about the true value. Over $400$ trials per SNR, the empirical variance of $\widehat{v}$ is consistent with $\crb(v)$ within Monte-Carlo uncertainty ($\operatorname{var}(\widehat{v})/\crb(v) \in [0.88, 1.05]$).

Under matched filtering, a bandwidth-agile CPI gives a velocity CRB
identical to the uniform-CPI value to $12$ digits, confirming
Proposition~\ref{prop:invariance}.

Table~\ref{tab:carrier} sweeps the hop shape at fixed
amplitude, with sequences constructed as
\begin{equation}
  \vh = \sqrt{M} \left[ r\,\frac{\vt_c}{\|\vt_c\|} + \sqrt{1-r^2}\,\frac{\bm{u}_\perp}{\|\bm{u}_\perp\|} \right],
  \label{eq:hopgen}
\end{equation}
with $\vt_c$ as in the proof of Proposition~\ref{prop:carrier} and $\bm{u}_\perp$ a random vector orthogonalized
against $\bm{1}$ and $\vt_c$, then scaled to the hop amplitude. The construction parameter $r$ in \eqref{eq:hopgen} prescribes the unweighted correlation of the hop shape with centered slow time, whereas $r_\gamma$ in \eqref{eq:rgamma} is the realized SNR-weighted correlation; under the constant per-pulse SNR of the matched-filter scenarios the two coincide. The measured inflation matches $1/(1-r_\gamma^2)$ across four orders of magnitude
in the penalty. Hops drawn i.i.d.\ uniform give $r_\gamma=O_p(M^{-1/2})$ (Corollary~\ref{cor:random}) and a ${<}1\%$ effect; an orthogonalized sequence removes the leading-order penalty exactly ($r_\gamma=0$ to machine precision), leaving only the $O(\epsilon)$ residual; a ramp-correlated sequence at $r_\gamma=0.9$ inflates the CRB by
$5.3\times$. Varying $\epsilon$ away from the ramp limit moves only the $O(\epsilon)$ correction, confirming that the cost is set by hop shape, not size; in the pure-ramp case the inflation becomes
amplitude-controlled, $O(\epsilon^{-2})$ (here ${\sim}3\times 10^5$ at $\epsilon=2\times 10^{-3}$).

\begin{table}[t]
\centering
\caption{Carrier-hop velocity-CRB inflation vs.\ hop shape (matched filter, $r_\gamma$ from \eqref{eq:rgamma}). Rows: i.i.d.\ uniform draws ($r_\gamma$ random, typically ${\approx}0$); explicit orthogonalization ($r_\gamma=0$ by construction); sequences from \eqref{eq:hopgen} with prescribed $r=0.5$ and $0.9$ ($r=r_\gamma$ under constant per-pulse SNR); limiting pure ramp ($r_\gamma=1$).}
\label{tab:carrier}
\begin{tabular}{lcc}
\hline
Hop sequence & $r_\gamma$ & $\crb_{\mathrm{agile}}/\crb_{\mathrm{unif}}$ \\
\hline
i.i.d.\ uniform               & $\approx 0$ & $\approx 1.00$ \\
orthogonalized                & $0.000$     & $0.9998$ \\
ramp-correlated               & $0.500$     & $1.333$ \\
ramp-correlated               & $0.900$     & $5.26$ \\
pure ramp ($\vfh\!\propto\!\vt$) & $1.000$     & $\gg 1$$^{\ast}$ \\
\hline
\multicolumn{3}{l}{\footnotesize $^{\ast}$Amplitude-controlled, $O(\epsilon^{-2})$: ${\approx}3\times10^{5}$ at $\epsilon=2\times10^{-3}$.}
\end{tabular}
\end{table}

For the RSM-control mismatched filter (MMF), the mismatch loss tracks the bandwidth ratio, $L_{\mathrm{mm}}(m)\approx \beta(m)/\beta_{\min}$,
anchored at $L_{\mathrm{mm}}(\beta_{\min})=1$ ($0$~dB) and reaching ${\approx}9.2$~dB at $\beta_{\max}/\beta_{\min}=8.24$, varying by $<0.01$~dB over two decades of Tikhonov loading $\delta$. This is a property of this specific template/loading design, not a universal
mismatch-filter law: a different template, loading, filter length, or spectral taper would give a different $L_{\mathrm{mm}}(\beta)$. Propagated through $w(m)=1/L_{\mathrm{mm}}(m)$, a representative
bandwidth-agile pattern inflates the velocity CRB by ${\approx}3.1\times$ (mean per-pulse loss ${\approx}6.4$~dB), i.e.\ $\sqrt{\crb(v)}$ from $1.43$ to $2.49$~mm/s (dashed curve in Fig.~\ref{fig:crb}), which is below the ${\sim}12$~dB ensemble loss reported for the finite-length design in~\cite{mattingly}, consistent with the bandwidth ratio being the large-filter-length floor.

\begin{table}[t]
\centering
\caption{Velocity \(\sqrt{\mathrm{CRB}}\) (mm/s), Sec. VI parameters, 20 dB per-pulse SNR.}
\label{tab:summary}
\begin{tabular}{lc}
\hline
Scenario & $\sqrt{\crb(v)}$ (mm/s) \\
\hline
Uniform CPI & $1.43$ \\
Bandwidth-agile (matched filter) & $1.43$ \\
Frequency-agile, i.i.d.\ hops (matched filter) & $1.43$ \\
Bandwidth-agile (RSM-control MMF) & $2.49$ \\
\hline
\end{tabular}
\end{table}

\begin{figure}[t]
\centering
\includegraphics[width=\columnwidth]{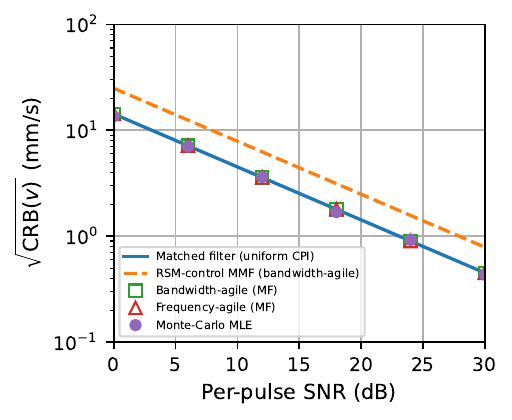}
\caption{Velocity $\sqrt{\crb}$ vs.\ per-pulse SNR. Bandwidth-agile
(any sequence) and frequency-agile (i.i.d.\ hops, $r_\gamma\!\approx\!0$) matched-filter cases (open markers) coincide with the uniform-CPI bound (solid): Proposition~\ref{prop:invariance} and the low-correlation case of Proposition~\ref{prop:carrier}. Monte-Carlo MLE (filled) attains it. The RSM-control mismatched filter (dashed) is the lone offset, the processing cost of
Corollary~\ref{cor:cost}. Ramp-correlated hops (Table~\ref{tab:carrier}) lie well above the solid line.} 
\label{fig:crb}
\end{figure}

%%%%%%%%%%%%%%%%%%%%%%%%%%%%%%%%%%%%%%%%%%%%%%%
\section{Design implication and Conclusions}
\label{sec:open}
%%%%%%%%%%%%%%%%%%%%%%%%%%%%%%%%%%%%%%%%%%%%%%%%

Interpreted as a design constraint, Proposition 2 yields that carrier hopping should be selected not only for spectral coexistence but to preserve the velocity-information residual, since the leading penalty is governed by hop shape rather than hop magnitude. It suffices to keep $|r_\gamma|\le r_{\max}$ for a tolerance set by the allowable inflation $1/(1-r_{\max}^2)$, and the constraint is cheap:
orthogonality to slow time is a single linear constraint on the $M$ hop degrees of freedom; randomized hopping satisfies it typically (Corollary~\ref{cor:random}); explicit orthogonalization satisfies it to machine precision (Table~\ref{tab:carrier}). 

Within the resolved-bin white-noise model, bandwidth agility, by Proposition 1 and Corollary 2, costs velocity information only through the processing choice made to control RSM: a filter-design budget, rather than a waveform-selection criterion.

Two refinements lie beyond this model. The first concerns the stop-and-hop approximation: relaxing it reintroduces the bandwidth into the velocity sensitivity itself, through the range--Doppler coupling $\Delta r(m) = -v\,G(m)\,T_p/\beta(m)$. This is the envelope-delay channel exploited by the full-waveform
stepped-frequency bound~\cite{liutsp} but absent from the resolved-bin phase model, so Proposition~\ref{prop:invariance} is precisely a stop-and-hop statement, with leading corrections governed by
\begin{equation*}
\left( \max_m \frac{f_D(m)}{\beta(m)} \right) \frac{T_p}{M \Tpri}.
\end{equation*}

The second concerns the disturbance. Clutter and interference lie outside the present white noise model: compressed by a different filter on every pulse, their slow-time signature is modulated by the agility pattern, and the disturbance covariance at the resolved bin becomes a full matrix $\bm{C}(\{G(m),\beta(m)\})$. The bound generalizes by taking the projection \eqref{eq:projform} in the $\mathbf{C}^{-1}$ metric, where the invariance of Proposition~\ref{prop:invariance} is expected to break and the design constraint becomes one on the whitened residual, $\|\Pperp{\{\vg,\vone\}}\vl\|_{\bm{C}^{-1}}\ge\kappa$. How a spectrum-dictated pattern degrades that clutter-whitened bound is the natural sequel, with the present result as its baseline.

\end{document}